%
%
%

\documentclass[graybox]{svmult}


\usepackage{mathptmx}       
\usepackage{helvet}         
\usepackage{courier}        
\usepackage{type1cm}        
%
\usepackage{makeidx}         
\usepackage{graphicx}        
\usepackage{multicol}        
\usepackage[bottom]{footmisc}
\usepackage{amssymb}
\usepackage{amsfonts,amsmath}


\makeindex             


\begin{document}

\title*{Discrete versions of some  Dirac type equations and plane wave solutions}
\author{Volodymyr Sushch}
\institute{Volodymyr Sushch \at Koszalin University of Technology, Sniadeckich 2,
 75-453 Koszalin, Poland, \email{volodymyr.sushch@tu.koszalin.pl}}

%
%
\maketitle

\abstract*{A discrete version of the plane wave solution to some discrete Dirac type equations in the spacetime algebra is established.
The conditions under which a discrete analogue of the plane wave solution satisfies the discrete Hestenes equation are briefly discussed.}

\abstract {A discrete version of the plane wave solution to some discrete Dirac type equations in the spacetime algebra is established.
The conditions under which a discrete analogue of the plane wave solution satisfies the discrete Hestenes equation are briefly discussed.}

\section{Introduction}
\label{sec:1}
This work is a direct continuation of that described in my previous papers \cite{S3, S4}.
In \cite{S3}, a discrete analogue of the Dirac  equation for a free electron in the Hestenes form was constructed based on the discretization
scheme \cite{S2}. In \cite{S4}, a relationship between  the discrete   Dirac-K\"{a}hler  equation and discrete
analogues of some Dirac type equations in the spacetime algebra was discussed. In this paper, we establish a discrete version of plane wave solutions to discrete Dirac type equations.

We first briefly review some definitions and basic facts  on the Dirac-K\"{a}hler equation \cite{Kahler, Rabin} and the Dirac equation
 in the spacetime algebra \cite{H1, H2}.
Let $M={\mathbb R}^{1,3}$ be  Minkowski space with  metric signature  $(+,-,-,-)$.
Denote by $\Lambda^r(M)$ the vector space of smooth differential $r$-forms, $r=0,1,2,3,4$. We consider  $\Lambda^r(M)$ over $\mathbb{C}$.
Let $d:\Lambda^r(M)\rightarrow\Lambda^{r+1}(M)$ be the exterior differential and let $\delta:\Lambda^r(M)\rightarrow\Lambda^{r-1}(M)$ be the formal adjoint of $d$  with respect to  the natural inner product in $\Lambda^r(M)$. We have $\delta=\ast d\ast$, where  $\ast$ is the Hodge star operator  $\ast:\Lambda^r(M)\rightarrow\Lambda^{4-r}(M)$ with respect to the Lorentz metric.
 Denote by $\Lambda(M)$ the set of all differential forms on $M$. We have
\begin{equation*}
\Lambda(M)=\Lambda^{ev}(M)\oplus\Lambda^{od}(M),
\end{equation*}
where $\Lambda^{ev}(M)=\Lambda^0(M)\oplus\Lambda^2(M)\oplus\Lambda^4(M)$  and  $\Lambda^{od}(M)=\Lambda^1(M)\oplus\Lambda^3(M)$.

Let $\Omega\in\Lambda(M)$
be an inhomogeneous differential form, i.e.
$\Omega=\sum_{r=0}^4\overset{r}{\omega},$
where $\overset{r}{\omega}\in\Lambda^r(M)$.
 The Dirac-K\"{a}hler equation is given by
\begin{equation}\label{eq:01}
i(d+\delta)\Omega=m\Omega,
\end{equation}
where $i$ is the usual complex unit  and  $m$  is a mass parameter.
It is easy to show that Eq.~(\ref{eq:01}) is equivalent to the set of equations
\begin{eqnarray*}\label{}
i\delta\overset{1}{\omega}=m\overset{0}{\omega},\\
i(d\overset{0}{\omega}+\delta\overset{2}{\omega})=m\overset{1}{\omega},\\
i(d\overset{1}{\omega}+\delta\overset{3}{\omega})=m\overset{2}{\omega},\\
i(d\overset{2}{\omega}+\delta\overset{4}{\omega})=m\overset{3}{\omega},\\
id\overset{3}{\omega}=m\overset{4}{\omega}.
\end{eqnarray*}
The operator $d+\delta$ is an analogue of the gradient operator $\nabla=\sum_{\mu=0}^3\gamma_\mu\partial^\mu$ in Minkowski spacetime,   where $\gamma_\mu$ is the Dirac gamma matrix and $\partial^\mu$ is a partial derivative. Think of $\{\gamma_0, \gamma_1, \gamma_2, \gamma_3\}$ as a vector basis in spacetime. Then the  gamma matrices $\gamma_\mu$ can be considered as generators of the Clifford algebra  $\emph{C}\ell(1,3)$  \cite{B, B1}. This algebra Hestenes \cite{H2} calls  the spacetime algebra. Denote by $\emph{C}\ell_{\mathbb{R}}(1,3)$ $(\emph{C}\ell_{\mathbb{C}}(1,3))$ the real (complex) Clifford algebra.  It is known that an inhomogeneous form $\Omega$ can be represented as element of $\emph{C}\ell_{\mathbb{C}}(1,3)$. Then the Dirac-K\"{a}hler equation can be written as the algebraic equation
\begin{equation}\label{eq:02}
  i\nabla\Omega=m\Omega, \quad \Omega\in\emph{C}\ell_{\mathbb{C}}(1,3).
 \end{equation}
  Eq.~(\ref{eq:02}) is equivalent to the four Dirac equations (traditional column-spinor equations) for a free electron.  Let $\emph{C}\ell^{ev}(1,3)$ be the even subalgebra of the  algebra $\emph{C}\ell(1,3)$. The equation
 \begin{equation}\label{eq:03}
-\nabla\Omega^{ev}\gamma_1\gamma_2=m\Omega^{ev}\gamma_0, \quad \Omega^{ev}\in\emph{C}\ell_{\mathbb{R}}^{ev}(1,3),
 \end{equation}
is called the Hestenes form of the Dirac equation \cite{H1, H2}.
Consider also  the equation
\begin{equation}\label{eq:04}
 i\nabla\Omega^{ev}=m\Omega^{ev}\gamma_0 , \quad \Omega^{ev}\in\emph{C}\ell_{\mathbb{C}}^{ev}(1,3).
 \end{equation}
 In  \cite{J},  this equation  is called  the "generalized bivector Dirac equation". Following Baylis \cite{B1} we call Eq.~(\ref{eq:04}) the Joyce equation. This equation admits the plane wave solution of the form
 \begin{equation}\label{eq:05}
 \Psi=A\exp\Big(i\sum_{\mu=0}^3 p^\mu x_\mu\Big),
 \end{equation}
 where $A\in\emph{C}\ell_{\mathbb{C}}^{ev}(1,3)$ is a constant element and  $\{p^0, p^1, p^2, p^3\}$ is a four-momentum.
 Suppose that for exterior forms (elements of $\Lambda(M)$) the Clifford multiplication is defined. It should be noted that the graded algebra
$\Lambda(M)$ endowed with the Clifford multiplication is an example of the  Clifford
algebra. In this case the basis covectors $e^\mu=dx^\mu$, $\mu=0,1,2,3$, of spacetime are considered as generators of the Clifford algebra.  Let $\Lambda_{\mathbb{R}}(M)$ denote the set of real-valued differential forms. Then Eqs.~(\ref{eq:03}) and (\ref{eq:04}) can be rewritten in terms of inhomogeneous forms as
\begin{equation}\label{eq:06}
-(d+\delta)\Omega^{ev} e^1e^2=m\Omega^{ev} e^0, \quad \Omega^{ev}\in\Lambda_{\mathbb{R}}^{ev}(M),
 \end{equation}
 and
\begin{equation}\label{eq:07}
 i(d+\delta)\Omega^{ev}=m\Omega^{ev} e^0, \quad \Omega^{ev}\in\Lambda^{ev}(M).
 \end{equation}

The aim of the present paper is to construct a discrete version of the plane wave solution (\ref{eq:05}). In much the same way as in the continuum case \cite{B1,J} we show that the discrete Joyce equation admits eight linearly independent plane wave solutions in the discrete formulation.  We briefly discuss the conditions under which  the obtained  plane wave solutions satisfy the discrete Hestenes equation.

\section{Discrete Dirac-K\"{a}hler equation}
\label{sec:2}
In this section, we start off with a  discretization scheme.
The scheme is based on the language of differential forms and  is described in  \cite{S2}.This approach was originated by Dezin in \cite{Dezin}.
Due to space limitations, we skip  the relevant material from  \cite{S2}. For the convenience of the reader, we fix only some notation and recall some facts   concerning  discrete analogues of the differential operators  $d$ and $\delta$.   All details can be found in   \cite{S1, S2}.

Let $K(4)=K\otimes K\otimes K\otimes K$
be a cochain complex with  complex  coefficients,
where  $K$ is  the 1-dimensional complex generated by 0- and 1-dimensional basis elements   $x^{\kappa}$  and $e^{\kappa}$,  $\kappa\in\mathbb{Z}$,  respectively.
Then an arbitrary $r$-dimensional basis element of $K(4)$ can be written as
$s^k_{(r)}=s^{k_0}\otimes s^{k_1}\otimes s^{k_2}\otimes s^{k_3}$, where
$s^{k_\mu}$ is either $x^{k_\mu}$ or $e^{k_\mu}$,  $k=(k_0,k_1,k_2,k_3)$ and \ $k_\mu\in\mathbb{Z}$.
Let
\begin{equation*}
x^k=x^{k_0}\otimes x^{k_1}\otimes x^{k_2}\otimes x^{k_3}, \qquad e^k=e^{k_0}\otimes e^{k_1}\otimes e^{k_2}\otimes e^{k_3}
\end{equation*}
denote the 0- and 4-dimensional basis elements of $K(4)$.
The dimension $r$ of a basis element $s^k_{(r)}$ is given
by the number of factors $e^{k_\mu}$ that appear in it. For example, the 1-dimensional basis element
$e^k_\mu\in K(4)$ can be written as
\begin{eqnarray*}
e^k_0=e^{k_0}\otimes x^{k_1}\otimes x^{k_2}\otimes x^{k_3},  \qquad
e^k_1=x^{k_0}\otimes e^{k_1}\otimes x^{k_2}\otimes x^{k_3}, \\
e^k_2=x^{k_0}\otimes x^{k_1}\otimes e^{k_2}\otimes x^{k_3},  \qquad
e^k_3=x^{k_0}\otimes x^{k_1}\otimes x^{k_2}\otimes e^{k_3},
\end{eqnarray*}
where  the subscript $\mu=0,1,2,3$ indicates  a place of $e^{k_\mu}$ in $e^k_\mu$.
Similarly,   $e_{\mu\nu}^k$, $\mu<\nu$, and  $e_{\iota\mu\nu}^k$, $\iota<\mu<\nu$, denote the   2- and 3-dimensional basic elements of $K(4)$.
The complex $K(4)$ is a discrete analogue of $\Lambda(M)$ and cochains play the role of differential
forms. Let us call  them forms or discrete forms to emphasize their relationship with differential
forms. Denote by  $K^r(4)$ the set of all $r$-forms. Then we have
\begin{equation*}
K(4)=K^{ev}(4)\oplus K^{od}(4),
\end{equation*}
 where $K^{ev}(4)=K^0(4)\oplus K^2(4)\oplus K^4(4)$ and $K^{od}(4)=K^1(4)\oplus K^3(4)$.
 Any $r$-form $\overset{r}{\omega}\in K^r(4)$ can be expressed as
\begin{eqnarray}\label{eq:08}
\overset{0}{\omega}=\sum_k\overset{0}{\omega}_kx^k, \qquad \overset{2}{\omega}=\sum_k\sum_{\mu<\nu} \omega_k^{\mu\nu}e_{\mu\nu}^k,  \qquad  \overset{4}{\omega}=\sum_k\overset{4}{\omega}_ke^k, \\ \label{eq:09}
\overset{1}{\omega}=\sum_k\sum_{\mu=0}^3\omega_k^\mu e_\mu^k, \qquad
\overset{3}{\omega}=\sum_k\sum_{\iota<\mu<\nu} \omega_k^{\iota\mu\nu}e_{\iota\mu\nu}^k,
\end{eqnarray}
where  $\overset{0}{\omega}_k, \ \omega_k^{\mu\nu}, \ \overset{4}{\omega}_k, \  \omega_k^\mu$ and $\omega_k^{\iota\mu\nu}$ are complex numbers.
A discrete inhomogeneous form $\Omega\in K(4)$  is defined to be
\begin{equation}\label{eq:10}
\Omega=\sum_{r=0}^4\overset{r}{\omega}.
\end{equation}
Let $d^c: K^r(4)\rightarrow K^{r+1}(4)$ be a discrete analogue of the exterior derivative $d$ and let $\delta ^c: K^r(4)\rightarrow K^{r-1}(4)$ be a discrete analogue of the codifferential $\delta$. For more precise  definitions of these operators we refer the reader to  \cite{S2}. In this paper  we give only the difference
expressions for $d^c$ and  $\delta ^c$.
Let the difference operator $\Delta_\mu$ be defined by
\begin{equation}\label{eq:11}
\Delta_\mu\omega_k^{(r)}=\omega_{\tau_\mu k}^{(r)}-\omega_k^{(r)},
\end{equation}
where  $\omega_k^{(r)}\in\mathbb{C}$ is a component of $\overset{r}{\omega}\in K^r(4)$ and
$\tau_\mu$ is   the shift operator  which acts  as
$\tau_\mu k=(k_0,...k_\mu+1,...k_3), \   \mu=0,1,2,3.$
For forms (\ref{eq:08}),  (\ref{eq:09})  we have
\begin{eqnarray}\label{eq:12}
d^c\overset{0}{\omega}=\sum_k\sum_{\mu=0}^3(\Delta_\mu\overset{0}{\omega}_k)e_\mu^k,  \qquad d^c\overset{1}{\omega}=\sum_k\sum_{\mu<\nu}(\Delta_\mu\omega_k^\nu-\Delta_\nu\omega_k^\mu)e_{\mu\nu}^k,
\end{eqnarray}
\begin{eqnarray}\label{eq:13}
d^c\overset{2}{\omega}=\sum_k\big[(\Delta_0\omega_k^{12}-\Delta_1\omega_k^{02}+\Delta_2\omega_k^{01})e_{012}^k
+(\Delta_0\omega_k^{13}-\Delta_1\omega_k^{03}+\Delta_3\omega_k^{01})e_{013}^k \nonumber \\
+(\Delta_0\omega_k^{23}-\Delta_2\omega_k^{03}+\Delta_3\omega_k^{02})e_{023}^k
+(\Delta_1\omega_k^{23}-\Delta_2\omega_k^{13}+\Delta_3\omega_k^{12})e_{123}^k\big],
\end{eqnarray}
\begin{equation}\label{eq:14}
d^c\overset{3}{\omega}=\sum_k(\Delta_0\omega_k^{123}-\Delta_1\omega_k^{023}+\Delta_2\omega_k^{013}-\Delta_3\omega_k^{012})e^k, \qquad d^c\overset{4}{\omega}=0,
\end{equation}
\begin{equation}\label{eq:15}
\delta^c\overset{0}{\omega}=0, \qquad \delta^c\overset{1}{\omega}=\sum_k(\Delta_0\omega_k^{0}-\Delta_1\omega_k^{1}-\Delta_2\omega_k^{2}-\Delta_3\omega_k^{3})x^k,
\end{equation}
\begin{eqnarray}\label{eq:16} \nonumber
\delta^c\overset{2}{\omega}=\sum_k\big[(\Delta_1\omega_k^{01}+\Delta_2\omega_k^{02}+\Delta_3\omega_k^{03})e_{0}^k
+(\Delta_0\omega_k^{01}+\Delta_2\omega_k^{12}+\Delta_3\omega_k^{13})e_{1}^k\\
+(\Delta_0\omega_k^{02}-\Delta_1\omega_k^{12}+\Delta_3\omega_k^{23})e_{2}^k
+(\Delta_0\omega_k^{03}-\Delta_1\omega_k^{13}-\Delta_2\omega_k^{23})e_{3}^k\big],
\end{eqnarray}
\begin{eqnarray}\label{eq:17} \nonumber
\delta^c\overset{3}{\omega}=\sum_k\big[(-\Delta_2\omega_k^{012}-\Delta_3\omega_k^{013})e_{01}^k+
(\Delta_1\omega_k^{012}-\Delta_3\omega_k^{023})e_{02}^k\\ \nonumber
+(\Delta_1\omega_k^{013}+\Delta_2\omega_k^{023})e_{03}^k
+(\Delta_0\omega_k^{012}-\Delta_3\omega_k^{123})e_{12}^k\\
+(\Delta_0\omega_k^{013}+\Delta_2\omega_k^{123})e_{13}^k
+(\Delta_0\omega_k^{023}-\Delta_1\omega_k^{123})e_{23}^k\big],
\end{eqnarray}
\begin{eqnarray}\label{eq:18}
\delta^c\overset{4}{\omega}=\sum_k\big[(\Delta_3\overset{4}{\omega}_k)e_{012}^k-(\Delta_2\overset{4}{\omega}_k)e_{013}^k
+(\Delta_1\overset{4}{\omega}_k)e_{023}^k+(\Delta_0\overset{4}{\omega}_k)e_{123}^k\big].
\end{eqnarray}
Let $\Omega\in K(4)$ be given by (\ref{eq:10}). A discrete analogue of the Dirac-K\"{a}hler equation (\ref{eq:01}) can be defined as
 \begin{equation}\label{eq:19}
i(d^c+\delta^c)\Omega=m\Omega.
\end{equation}
We can write this equation more explicitly by separating its homogeneous components as
\begin{eqnarray}\label{eq:20} \nonumber
i\delta^c\overset{1}{\omega}=m\overset{0}{\omega}, \quad i(d^c\overset{1}{\omega}+\delta^c\overset{3}{\omega})=m\overset{2}{\omega}, \quad
id^c\overset{3}{\omega}=m\overset{4}{\omega},\\
i(d^c\overset{0}{\omega}+\delta^c\overset{2}{\omega})=m\overset{1}{\omega}, \qquad
i(d^c\overset{2}{\omega}+\delta^c\overset{4}{\omega})=m\overset{3}{\omega}.
\end{eqnarray}
Substituting (\ref{eq:12})--(\ref{eq:18})  into (\ref{eq:20}) one  obtains the set of 16 difference equations   \cite{S2}.
\section{Discrete Hestenes and Joyce equations}
\label{sec:3}
As in \cite{S3},  we define  the Clifford multiplication of the basis elements $x^k$ and $e^k_\mu$,  \  $\mu=0,1,2,3$, by the following rules:

\medskip
(a) $x^kx^k=x^k, \quad x^ke^k_\mu=e^k_\mu x^k=e^k_\mu$;

(b) $e^k_\mu e^k_\nu+e^k_\nu e^k_\mu=2g_{\mu\nu}x^k$;

(c) $e^k_{\mu_1}\cdots e^k_{\mu_s}=e^k_{\mu_1\cdots \mu_s}$ \ for \ $0\leq \mu_1<\cdots <\mu_s\leq 3$.
\medskip

Here  $g_{\mu\nu}=diag(1,-1,-1,-1)$ is the metric tensor.
Note that the multiplication is defined for the basis elements of $K(4)$ with the same multi-index $k=(k_0,k_1,k_2,k_3)$ supposing the product to be zero in all other cases.
The operation is linearly extended to arbitrary discrete forms.

Consider the following unit forms
\begin{equation}\label{eq:21}
x=\sum_kx^k, \qquad e=\sum_ke^k, \qquad e_\mu=\sum_ke_\mu^k, \qquad e_{\mu\nu}=\sum_ke_{\mu\nu}^k,
\end{equation}
where $\mu, \nu=0,1,2,3$.
The unit 0-form $x$ plays  a role of the unit element in $K(4)$, i.e. for any $r$-form  $\overset{r}{\omega}$ we have $x\overset{r}{\omega}=\overset{r}{\omega}x=\overset{r}{\omega}$.
\begin{proposition}
The following holds:
\begin{equation}\label{eq:22}
e_\mu e_\nu+e_\nu e_\mu=2g_{\mu\nu}x, \qquad \mu,\nu=0,1,2,3.
\end{equation}
\end{proposition}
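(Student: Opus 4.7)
The plan is to reduce the identity on unit forms directly to the basis-level Clifford rule (b), using the fact that the Clifford multiplication in $K(4)$ vanishes on basis elements with different multi-indices. Since the unit forms are defined as sums over all $k\in\mathbb{Z}^4$ of the corresponding basis elements, the only nontrivial contributions to products like $e_\mu e_\nu$ come from diagonal terms.

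Concretely, I would expand
\begin{equation*}
e_\mu e_\nu = \Bigl(\sum_k e_\mu^k\Bigr)\Bigl(\sum_l e_\nu^l\Bigr) = \sum_{k,l} e_\mu^k e_\nu^l
\end{equation*}
by linearity of the Clifford multiplication. Since the product $e_\mu^k e_\nu^l$ is zero whenever $k\neq l$ (the multiplication is defined only for basis elements sharing the same multi-index), the double sum collapses to $\sum_k e_\mu^k e_\nu^k$. The same manipulation gives $e_\nu e_\mu = \sum_k e_\nu^k e_\mu^k$. Adding the two and applying rule (b) termwise yields
\begin{equation*}
e_\mu e_\nu + e_\nu e_\mu = \sum_k\bigl(e_\mu^k e_\nu^k + e_\nu^k e_\mu^k\bigr) = \sum_k 2g_{\mu\nu} x^k = 2g_{\mu\nu} x,
\end{equation*}
where the last equality uses the definition of the unit $0$-form $x=\sum_k x^k$ together with the fact that $g_{\mu\nu}$ is a constant, so it can be pulled out of the sum over $k$.

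There is really no major obstacle here: the entire content of the proposition is packaged into the local Clifford rule (b) together with the vanishing of mixed-index products. The only subtle point worth flagging explicitly is justifying the interchange and collapse of the (formally infinite) double sum over $k,l\in\mathbb{Z}^4$, which is harmless because at each fixed pair $(k,l)$ at most one term is nonzero, so no convergence issue arises and the rearrangement is purely algebraic.
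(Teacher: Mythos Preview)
Your proof is correct and is precisely the argument the paper has in mind: the paper's own proof consists of the single sentence ``By the rule (b) it is obvious,'' and you have simply unpacked this by expanding the unit forms, using that cross-index products vanish, and applying rule (b) termwise.
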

\begin{proof}
By the rule (b) it is obvious.
\end{proof}
\begin{proposition}
For any inhomogeneous form $\Omega\in K(4)$ we have
\begin{equation}\label{eq:23}
(d^c+\delta^c)\Omega=\sum_{\mu=0}^3e_\mu\Delta_\mu\Omega,
\end{equation}
where
 $\Delta_\mu$ is the difference operator which acts on each component of $\Omega$ by the rule~(\ref{eq:11}).
\end{proposition}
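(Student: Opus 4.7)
The plan is to reduce (\ref{eq:23}) to a pointwise Clifford-algebraic identity by linearity. Since both sides are linear in $\Omega$, it suffices to verify the equation on each elementary term $\omega_k^{(r)}s_{(r)}^k$, for $r=0,1,2,3,4$. Because rules (a)--(c) make the product of basis elements with different multi-indices vanish, the action of $e_\mu=\sum_l e_\mu^l$ on $s_{(r)}^k$ collapses to the single Clifford product $e_\mu^k s_{(r)}^k$. By rules (b) and (c) this product is either $\pm s_{(r+1)}^k$, when $\mu$ does not occur in the multi-index of $s_{(r)}^k$ (with sign given by the parity of the permutation that sorts $\mu$ into position), or $\pm g_{\mu\mu}s_{(r-1)}^k$, when $\mu$ does occur (with $\mu$ deleted from the multi-index).

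First I would dispatch the extreme cases $r=0$ and $r=4$. For $r=0$ we have $\delta^c\overset{0}{\omega}=0$ by (\ref{eq:15}), and the right-hand side of (\ref{eq:23}) produces only raising terms $e_\mu^k x^k=e_\mu^k$, reproducing exactly (\ref{eq:12}). For $r=4$ we have $d^c\overset{4}{\omega}=0$ by (\ref{eq:14}), and the right-hand side produces only lowering terms, with the $g_{\mu\mu}$ factors and the commutation signs yielding the coefficients displayed in (\ref{eq:18}). Then, for each intermediate degree $r=1,2,3$, I would substitute the expansions (\ref{eq:08})--(\ref{eq:09}) into $\sum_\mu e_\mu\Delta_\mu\overset{r}{\omega}$, group the resulting terms by the dimension of the output basis element, and match the grade-raising part against the corresponding formula among (\ref{eq:12})--(\ref{eq:14}) and the grade-lowering part against the corresponding formula among (\ref{eq:15})--(\ref{eq:17}).

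The main obstacle is the sign bookkeeping in the middle grades. The coefficients in (\ref{eq:16}) and (\ref{eq:17}) mix signs originating both in the Lorentz signature (through $g_{\mu\mu}$ in rule (b)) and in the parity of the permutation needed to bring the Clifford product into the canonical order of rule (c); these must agree exactly with the signs tabulated in (\ref{eq:16}) and (\ref{eq:17}). Once the $r=1$ case is carried out in detail as a template -- where $e_\mu^k e_\nu^k$ yields $g_{\mu\mu}x^k$ for $\mu=\nu$, $e_{\mu\nu}^k$ for $\mu<\nu$, and $-e_{\nu\mu}^k$ for $\mu>\nu$, so that the sum reproduces both $\delta^c\overset{1}{\omega}$ and $d^c\overset{1}{\omega}$ -- the same mechanism applies verbatim to $r=2$ and $r=3$ with no conceptual change, and the assembled contributions from all degrees yield (\ref{eq:23}).
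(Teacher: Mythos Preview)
Your proposal is correct. The paper does not actually prove this proposition in the text but simply refers to \cite{S4}; your direct verification---reducing by linearity to basis elements, collapsing $e_\mu s_{(r)}^k$ to $e_\mu^k s_{(r)}^k$ via the same-index rule, and then matching the grade-raising and grade-lowering pieces against the explicit tables (\ref{eq:12})--(\ref{eq:18})---is precisely the standard argument and is carried out accurately, including the sign checks you flag.
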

\begin{proof}
See Proposition~1 in  \cite{S4}.
\end{proof}
Thus the discrete Dirac-K\"{a}hler equation can be rewritten in the form
\begin{equation*}
i\sum_{\mu=0}^3e_\mu\Delta_\mu\Omega=m\Omega.
\end{equation*}
Let $\Omega^{ev}\in K^{ev}(4)$ be a real-valued even inhomogeneous form, i.e. $\Omega^{ev}=\overset{0}{\omega}+\overset{2}{\omega}+\overset{4}{\omega}$.  A discrete analogue of the Hestenes equation (\ref{eq:06}) is defined by
\begin{equation}\label{eq:24}
-(d^c+\delta^c)\Omega^{ev} e_1e_2=m\Omega^{ev} e_0,
\end{equation}
or equivalently,
\begin{equation*}
-\sum_{\mu=0}^3e_\mu\Delta_\mu\Omega^{ev}e_1e_2=m\Omega^{ev} e_0,
\end{equation*}
where $e_1, e_2$ and $e_0$ are given by  (\ref{eq:21}).
A discrete analogue of the Joyce equation (\ref{eq:07}) is given by
\begin{equation}\label{eq:25}
i(d^c+\delta^c)\Omega^{ev}=m\Omega^{ev} e_0,
\end{equation}
where $\Omega^{ev}\in K^{ev}(4)$ is a complex-valued even inhomogeneous form.
Clearly, Eq.~(\ref{eq:25}) can be rewritten in the form
\begin{equation*}
i\sum_{\mu=0}^3e_\mu\Delta_\mu\Omega^{ev}=m\Omega^{ev} e_0.
\end{equation*}
Applying (\ref{eq:12})--(\ref{eq:18}) Eqs.~(\ref{eq:24}) and (\ref{eq:25})   can be expressed also in terms of difference equations (see \cite{S4}).

Consider  the following constant forms
\begin{equation}\label{eq:26}
P_{\pm 0}=\frac{1}{2}(x\pm e_0), \qquad  P_{\pm 12}=\frac{1}{2}(x\pm ie_1e_2).
\end{equation}
   Since
 \begin{equation*}
(P_{\pm 0})^2=P_{\pm 0}P_{\pm 0}=P_{\pm 0}, \qquad  (P_{\pm 12})^2=P_{\pm 12}P_{\pm 12}=P_{\pm 12},
\end{equation*}
  it follows that  $P_{\pm 0}$ and $P_{\pm 12}$ are projectors.
The projectors $P_{\pm 0}$ and $P_{\pm 12}$ have the following properties:
\begin{equation}\label{eq:27}
P_{\pm 0}P_{\pm 12}=P_{\pm 12}P_{\pm 0}, \quad e_0P_{\pm 0}=P_{\pm 0}e_0,  \quad e_1e_2P_{\pm 12}=P_{\pm 12}e_1e_2,
\end{equation}
\begin{equation}\label{eq:28}
P_{\pm 0}=\pm P_{\pm 0}e_0,  \qquad  P_{\pm 12}=\pm iP_{\pm 12}e_1e_2.
\end{equation}
See \cite{S3} for more details.

Recall that the  Hestenes equation is defined on real-valued even  forms.
 First suppose that the discrete Hestenes equation (\ref{eq:24}) acts in $K(4)$, i.e. acts in the same space as  the discrete Dirac-K\"{a}hler equation.
\begin{proposition}
Let $\Omega^{ev}\in K^{ev}(4)$ be a solution of the discrete Joyce equation,  then
\begin{equation*}\label{3.12}
\Omega^{ev}=\Omega^{ev} P_{+0}+\Omega^{ev} P_{-0},
\end{equation*}
where
$\Omega^{ev} P_{+0}$  satisfies the discrete Dirac-K\"{a}hler equation while
$\Omega^{ev} P_{-0}$  satisfies the same equation but the sign of the right-hand side changed to its opposite.
\end{proposition}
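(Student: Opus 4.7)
The plan is to observe that the claimed decomposition is automatic from $P_{+0}+P_{-0}=x$, and then to establish the two Dirac--K\"ahler equations by right-multiplying the discrete Joyce equation (\ref{eq:25}) by each projector separately. The crucial inputs are an eigenvalue identity of $P_{\pm 0}$ with respect to $e_0$ and the fact that $d^c+\delta^c$ commutes with right multiplication by any constant (position-independent) form.

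First, from (\ref{eq:26}) one has $P_{+0}+P_{-0}=\tfrac12(x+e_0)+\tfrac12(x-e_0)=x$, and since $x$ is the unit of $K(4)$ the decomposition $\Omega^{ev}=\Omega^{ev}P_{+0}+\Omega^{ev}P_{-0}$ is immediate. Next I would derive $e_0P_{\pm 0}=\pm P_{\pm 0}$. From (\ref{eq:28}) we have $P_{\pm 0}=\pm P_{\pm 0}e_0$; right-multiplying by $e_0$ and using Proposition~1 with $g_{00}=1$ (so $e_0e_0=x$) gives $P_{\pm 0}e_0=\pm P_{\pm 0}$, and combining with the commutativity $e_0P_{\pm 0}=P_{\pm 0}e_0$ from (\ref{eq:27}) yields the claimed identity.

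Right-multiplying (\ref{eq:25}) by $P_{\pm 0}$, the right-hand side becomes
\begin{equation*}
m\Omega^{ev}e_0P_{\pm 0}=\pm m\,\Omega^{ev}P_{\pm 0}.
\end{equation*}
For the left-hand side, I would invoke Proposition~2 to write $(d^c+\delta^c)\Omega^{ev}=\sum_{\mu=0}^3 e_\mu\Delta_\mu\Omega^{ev}$. Because $P_{\pm 0}$ has the same local structure $\tfrac12(x^k\pm e^k_0)$ at every lattice point and Clifford multiplication is defined to be non-zero only on basis elements sharing the same multi-index $k$, right multiplication by $P_{\pm 0}$ commutes with the shift $\tau_\mu$, hence with $\Delta_\mu$; it also commutes with left multiplication by $e_\mu$ by associativity. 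Consequently $((d^c+\delta^c)\Omega^{ev})P_{\pm 0}=(d^c+\delta^c)(\Omega^{ev}P_{\pm 0})$, and assembling the two sides gives $i(d^c+\delta^c)(\Omega^{ev}P_{\pm 0})=\pm m(\Omega^{ev}P_{\pm 0})$, which is precisely the discrete Dirac--K\"ahler equation (\ref{eq:19}) for $\Omega^{ev}P_{+0}$ and the same equation with the mass sign flipped for $\Omega^{ev}P_{-0}$.

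The only step requiring genuine care is the commutation of $\Delta_\mu$ with right multiplication by the constant form $P_{\pm 0}$; once this componentwise fact is checked from the rules defining the Clifford product on $K(4)$, the rest of the argument collapses into the algebraic identity $e_0P_{\pm 0}=\pm P_{\pm 0}$ already encoded in (\ref{eq:27})--(\ref{eq:28}).
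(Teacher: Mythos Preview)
Your proposal is correct and is precisely the argument the paper has in mind: the paper simply states that the proof follows from (\ref{eq:27}) and (\ref{eq:28}), and your write-up is nothing more than an explicit unpacking of that remark, using $P_{+0}+P_{-0}=x$, the eigenvalue identity $e_0P_{\pm 0}=\pm P_{\pm 0}$, and the commutation of $(d^c+\delta^c)$ with right multiplication by a constant form.
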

\begin{proposition}
Let $\Omega^{ev}\in K^{ev}(4)$ be a solution of the discrete Joyce equation,  then
\begin{equation*}\label{3.13}
\Omega^{ev}=\Omega^{ev} P_{+12}+\Omega^{ev} P_{-12},
\end{equation*}
where
$\Omega^{ev} P_{+12}$  satisfies the discrete Hestenes equation  while
$\Omega^{ev} P_{-12}$  satisfies the same equation but the sign of the right-hand side changed to its opposite.
\end{proposition}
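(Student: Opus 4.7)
The plan is to pass from the discrete Joyce equation to the discrete Hestenes equation (\ref{eq:24}) by right-multiplying a solution $\Omega^{ev}$ first by the projector $P_{\pm 12}$ and then by $e_1 e_2$, using the algebraic relations in (\ref{eq:22}), (\ref{eq:27}), and (\ref{eq:28}). The additive decomposition $\Omega^{ev}=\Omega^{ev}P_{+12}+\Omega^{ev}P_{-12}$ itself is immediate, since $P_{+12}+P_{-12}=\tfrac{1}{2}(x+ie_1e_2)+\tfrac{1}{2}(x-ie_1e_2)=x$ and $x$ acts as the unit on $K(4)$.

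The first step I would record is the commutation $(d^c+\delta^c)(\Omega^{ev}P_{\pm 12})=\bigl[(d^c+\delta^c)\Omega^{ev}\bigr]P_{\pm 12}$. By Proposition~2 one has $(d^c+\delta^c)\Omega=\sum_{\mu=0}^3 e_\mu \Delta_\mu \Omega$; since $\Delta_\mu$ acts only on the scalar coefficients and $P_{\pm 12}$ has $k$-independent coefficients, $\Delta_\mu(\Omega^{ev}P_{\pm 12})=(\Delta_\mu\Omega^{ev})P_{\pm 12}$, and associativity of the Clifford multiplication completes the identity. The second observation is that $e_0$ commutes with $P_{\pm 12}$, since $e_0$ anticommutes separately with $e_1$ and with $e_2$ by (\ref{eq:22}), so $e_0 e_1 e_2 = e_1 e_2 e_0$.

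With these in hand, I would multiply the Joyce equation $i(d^c+\delta^c)\Omega^{ev}=m\Omega^{ev}e_0$ on the right by $P_{\pm 12}$ and move the projector through $e_0$ to obtain
\[
i(d^c+\delta^c)(\Omega^{ev}P_{\pm 12}) = m(\Omega^{ev}P_{\pm 12})\,e_0,
\]
and then multiply both sides on the right by $e_1 e_2$. Using $e_0 e_1 e_2 = e_1 e_2 e_0$ together with the identity $P_{\pm 12}\,e_1 e_2 = \mp i\,P_{\pm 12}$, which follows from (\ref{eq:28}) and $(e_1 e_2)^2=-x$, the right-hand side collapses to $\mp i m(\Omega^{ev}P_{\pm 12})\,e_0$. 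Dividing through by $i$ and negating yields
\[
-(d^c+\delta^c)(\Omega^{ev}P_{\pm 12})\,e_1 e_2 = \pm m(\Omega^{ev}P_{\pm 12})\,e_0,
\]
which for the upper sign is precisely (\ref{eq:24}) applied to $\Omega^{ev}P_{+12}$, and for the lower sign is the same equation for $\Omega^{ev}P_{-12}$ with the sign of the right-hand side reversed.

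Conceptually the proof is a short symbolic manipulation; the main care points are the commutation of $d^c+\delta^c$ with right multiplication by $P_{\pm 12}$, which is what Proposition~2 makes transparent, and the sign bookkeeping in $P_{\pm 12}e_1 e_2 = \mp i P_{\pm 12}$, where the minus sign in $(e_1 e_2)^2=-x$ is precisely what converts the prefactor on the right-hand side from $m$ to $\pm m$ and thus produces the two branches claimed. I do not anticipate any deeper obstruction.
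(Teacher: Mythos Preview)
Your proposal is correct and follows the same route the paper intends: the paper's own proof is merely the line ``By (\ref{eq:27}) and (\ref{eq:28}), the proof is straightforward,'' and you have carried out exactly that straightforward computation, with the only added ingredient being the use of Proposition~2 to justify that $(d^c+\delta^c)$ commutes with right multiplication by the constant form $P_{\pm 12}$.
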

By (\ref{eq:27}) and (\ref{eq:28}), the proof is straightforward.

In the case of the real-defined discrete Hestenes equation, we have the following results. It is proven in  \cite[Proposition~5]{S4} that by a solution of the discrete Dirac-K\"{a}hler equation four independent solutions of the discrete Hestenes  equation (\ref{eq:24}) are constructed.
Every solution  of the discrete Joyce equation can be represented in the form in which  each term of the real and imaginary parts  is a real even solution   of the discrete Hestenes equation with the correct or reversed sign on the right-hand side \cite[Proposition~6]{S3}. These are discrete versions of well-known results for corresponding continuum equations.

\section{Plane wave solutions}
\label{sec:}
Let us consider the following 0-form
\begin{equation}\label{eq:29}
 \psi=\sum_k\psi_kx^k,
 \end{equation}
 where
 \begin{equation}\label{eq:30}
  \psi_k=(ip_0+1)^{k_0}(ip_1+1)^{k_1}(ip_2+1)^{k_2}(ip_3+1)^{k_3}, \qquad p_\mu\in\mathbb{R}.
 \end{equation}
We wish to find a solution of the discrete Joyce equation of the form

\begin{equation}\label{eq:31}
 \Omega=A\psi,
 \end{equation}
 where
   $A\in K^{ev}(4)$ is an inhomogeneous constant form. A constant form means that its components do not depend on $k$.  More explicitly, let
     \begin{equation*}
  A=\overset{0}{\alpha}+\overset{2}{\alpha}+\overset{4}{\alpha},
 \end{equation*}
where
\begin{equation*}
  \overset{0}{\alpha}=\alpha^0x, \qquad \overset{2}{\alpha}=\sum_{\mu<\nu}\alpha^{\mu\nu}e_{\mu\nu}, \qquad \overset{4}{\alpha}=\alpha^4e,
   \end{equation*}
and $\alpha^0, \alpha^{\mu\nu}, \alpha^4\in\mathbb{C}$. Recall that  $x$, $e_{\mu\nu}$ and $e$ are the unit forms given by (\ref{eq:21}).
The form  (\ref{eq:31}), where the components of $\psi$ are given by (\ref{eq:30}),  is a discrete version  of the plane wave solution (\ref{eq:05}).

It is easy to check that
\begin{equation*}
  \Delta_\mu\psi_k=ip_\mu\psi_k, \quad \mu=0,1,2,3.
   \end{equation*}
   Consequently, the factor $ip_\mu$ is an eigenvalue of the difference operator $\Delta_\mu$.
   This clearly forces
\begin{equation*}
   d^c\psi=\sum_k\sum_{\mu=0}^3(ip_\mu\psi_k)e_\mu^k.
 \end{equation*}
According to (\ref{eq:23}) we have
 \begin{eqnarray*}
(d^c+\delta^c)\Omega=(d^c+\delta^c)A\psi=\sum_{\mu=0}^3e_\mu\Delta_\mu (A\psi)=\sum_{\mu=0}^3e_\mu A(\Delta_\mu\psi)\\
=\sum_{\mu=0}^3e_\mu A\Big(\sum_k(\Delta_\mu\psi_k) x^k\Big)=i\sum_{\mu=0}^3e_\mu p_\mu A\Big(\sum_k\psi_k x^k\Big)=i\Big(\sum_{\mu=0}^3e_\mu p_\mu\Big) A\psi.
\end{eqnarray*}
Substituting  into equation (\ref{eq:25})  we obtain
 \begin{equation*}
   -\Big(\sum_{\mu=0}^3e_\mu p_\mu\Big) A\psi=mA\psi e_0.
 \end{equation*}
Since  $\psi e_0=e_0\psi$, this equation  reduces to
 \begin{equation}\label{eq:32}
   -\Big(\sum_{\mu=0}^3e_\mu p_\mu\Big) A=mAe_0.
 \end{equation}
From (\ref{eq:22}) it follows that $e_0e_0=x$. Then Eq.~(\ref{eq:32}) can be written as
 \begin{equation}\label{eq:33}
   -\Big(p_0x+\sum_{\mu=1}^3p_\mu e_0 e_\mu\Big)A=me_0A e_0.
 \end{equation}
 By (\ref{eq:22}) and by  trivial computation, we have
 \begin{equation*}
   \Big(p_0x-\sum_{\mu=1}^3p_\mu e_0 e_\mu\Big)\Big(p_0x+\sum_{\mu=1}^3p_\mu e_0 e_\mu\Big)=\Big(p_0^2-\sum_{\mu=1}^3p_\mu^2\Big)x.
 \end{equation*}
 Therefore, multiplying both sides of (\ref{eq:33}) by the same factor $-\big(p_0x-\sum_{\mu=1}^3p_\mu e_0 e_\mu\big)$ gives
 \begin{equation*}
  \Big(p_0^2-\sum_{\mu=1}^3p_\mu^2\Big)xA=-m\Big(p_0x-\sum_{\mu=1}^3p_\mu e_0 e_\mu\Big)e_0A e_0.
 \end{equation*}
 This implies
  \begin{equation*}
  \Big(p_0^2-\sum_{\mu=1}^3p_\mu^2\Big)A=-m\Big(\sum_{\mu=0}^3p_\mu e_\mu\Big)A e_0.
 \end{equation*}
 Applying (\ref{eq:32}) again  we obtain
 \begin{equation*}
  \Big(p_0^2-\sum_{\mu=1}^3p_\mu^2\Big)A=m^2A e_0e_0,
 \end{equation*}
 or equivalently,
 \begin{equation*}
  \Big(p_0^2-\sum_{\mu=1}^3p_\mu^2\Big)A=m^2A.
 \end{equation*}
  Thus we have the following assertion.
  \begin{proposition}
  The form (\ref{eq:31}) is a non-trivial solution of Eq.~(\ref{eq:25}) if and only if
  \begin{equation*}
  p_0^2=\sum_{\mu=1}^3p_\mu^2+m^2,
 \end{equation*}
 or equivalently,
  \begin{equation}\label{eq:34}
   p_0=\pm\sqrt{m^2+p_1^2+p_2^2+p_3^2}.
 \end{equation}
  \end{proposition}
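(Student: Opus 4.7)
The plan is to reduce the discrete Joyce equation, evaluated on the ansatz $\Omega=A\psi$, to a purely algebraic equation on the constant form $A$, and then to square the Clifford operator on the left-hand side so that the equation collapses to a scalar condition on the momenta.

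First I would use Proposition on $(d^c+\delta^c)$ (equation (\ref{eq:23})) together with the eigenvalue relation $\Delta_\mu\psi_k=ip_\mu\psi_k$. Since $A$ is constant in $k$, the difference operator passes through $A$ and acts only on $\psi$, yielding
\[
(d^c+\delta^c)(A\psi)=i\Bigl(\sum_{\mu=0}^{3}p_\mu e_\mu\Bigr)A\psi.
\]
Substituting into (\ref{eq:25}) and using that the scalar form $\psi$ commutes with $e_0$, the factor $\psi$ cancels from both sides and we are left with the algebraic equation
\[
-\Bigl(\sum_{\mu=0}^{3}p_\mu e_\mu\Bigr)A = m A e_0. \tag{$\ast$}
\]

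Next I would perform the standard "squaring" trick adapted to the Clifford structure of $K(4)$. Multiplying $(\ast)$ on the left by $-\bigl(p_0 x-\sum_{\mu=1}^{3}p_\mu e_0e_\mu\bigr)$ and using $e_0e_0=x$ together with (\ref{eq:22}) to compute the product
\[
\Bigl(p_0 x-\sum_{\mu=1}^{3}p_\mu e_0 e_\mu\Bigr)\Bigl(p_0 x+\sum_{\mu=1}^{3}p_\mu e_0 e_\mu\Bigr)=\Bigl(p_0^2-\sum_{\mu=1}^{3}p_\mu^2\Bigr)x,
\]
the left-hand side becomes $(p_0^2-\sum_{\mu=1}^{3}p_\mu^2)A$. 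On the right-hand side I re-apply $(\ast)$ to replace $-(\sum p_\mu e_\mu)A$ by $mAe_0$, after which the two copies of $e_0$ collapse via $e_0e_0=x$, leaving $m^2 A$. Comparing the two sides gives $(p_0^2-\sum_{\mu=1}^{3}p_\mu^2-m^2)A=0$, which for non-trivial $A$ is equivalent to (\ref{eq:34}).

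The main obstacle is bookkeeping: one must carefully track the side on which each Clifford factor acts, since $e_\mu$ does not commute with $A\in K^{ev}(4)$, and one must verify that the chosen left-multiplier really anticommutes with the spatial part of the Dirac operator so that the cross terms vanish by (\ref{eq:22}). The converse direction — that (\ref{eq:34}) is sufficient for a non-trivial $A$ to exist — is essentially free, since once the squared equation is satisfied identically, $(\ast)$ becomes a linear algebraic constraint on $A\in K^{ev}(4)$ of non-maximal rank, and any $A$ of the form $A=\bigl(mx-\sum_{\mu=0}^{3}p_\mu e_\mu\bigr)B e_0$ for arbitrary constant even $B$ provides a non-trivial solution.
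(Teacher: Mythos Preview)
Your approach matches the paper's: reduce (\ref{eq:25}) on $\Omega=A\psi$ to the algebraic equation $(\ast)$, then left-multiply by the Clifford conjugate and re-apply $(\ast)$ to collapse to a scalar. Two small repairs are needed. First, the product identity you display applies only after $(\ast)$ has been left-multiplied by $e_0$ to obtain $-(p_0x+\sum_{\mu=1}^{3}p_\mu e_0e_\mu)A=me_0Ae_0$ (this is the paper's equation (\ref{eq:33})); as written, the left-hand side of $(\ast)$ is $-\sum_{\mu}p_\mu e_\mu\,A$, and left-multiplying that by $-(p_0x-\sum_{\mu=1}^{3} p_\mu e_0e_\mu)$ does not produce $(p_0^2-\sum_{\mu=1}^{3} p_\mu^2)A$. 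Your right-hand-side manipulation (``re-apply $(\ast)$ to get $m^2A$'') likewise presupposes this intermediate rewriting. Second, your explicit converse $A=(mx-\sum_{\mu} p_\mu e_\mu)Be_0$ is in general neither an even form nor a solution of $(\ast)$: a direct check gives $-\sum_{\nu} p_\nu e_\nu\,A=m(mx-\sum_{\mu} p_\mu e_\mu)Be_0$ while $mAe_0=m(mx-\sum_{\mu} p_\mu e_\mu)B$, and these differ. The paper does not produce such a closed-form $A$ at this stage; existence of nontrivial even $A$ under (\ref{eq:34}) is supplied instead by the decomposition $A=A_++A_-$ in Theorem~1, where $A_+$ is chosen freely and $A_-$ is determined from it.
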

  The condition (\ref{eq:34}) is the same as in the case of continuum counterpart \cite{B1}.

Let us represent the even form  $A$  as
\begin{equation*}
  A=A_{+}+A_{-},
 \end{equation*}
 where
 \begin{equation}\label{eq:35}
   A_{+}=\alpha^{0}x+\alpha^{12}e_{12}+\alpha^{13}e_{13}+\alpha^{23}e_{23},
  \end{equation}
 \begin{equation}\label{eq:36}
   A_{-}=\alpha^{01}e_{01}+\alpha^{02}e_{02}+\alpha^{03}e_{03}+\alpha^4e.
  \end{equation}
  It is easy to check that $A_{+}$ commutes with $e_0$  and  $A_{-}$  anticommutes with it, i.e.
  \begin{equation}\label{eq:37}
  e_0A_{\pm}=\pm A_{\pm}e_0.
 \end{equation}
 \begin{lemma}
  The form $e_{0\mu}A_{-}$ commutes with $e_0$ and has the view   (\ref{eq:35}),  while
  $e_{0\mu}A_{+}$ anticommutes with $e_0$ and has the view   (\ref{eq:36}) for any $\mu=1,2,3$.
  \end{lemma}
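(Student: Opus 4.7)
The plan is to split the claim into two parts: (i) the commutation property of $e_{0\mu}A_\pm$ with $e_0$, and (ii) the identification of which of the two shapes (\ref{eq:35}) or (\ref{eq:36}) the product has. Part (i) reduces to a short algebraic manipulation using the already-established rule (\ref{eq:37}), while (ii) follows from an observation that the shapes (\ref{eq:35}) and (\ref{eq:36}) are exactly the two eigenspaces of the map $B\mapsto e_0 B e_0^{-1}$ acting on $K^{ev}(4)$.

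First I would verify that $e_{0\mu}$ itself anticommutes with $e_0$ whenever $\mu\in\{1,2,3\}$. Indeed, since $e_0e_0=x$ and $e_0e_\mu=-e_\mu e_0$ by (\ref{eq:22}), one computes directly $e_0 e_{0\mu}=e_0e_0e_\mu=e_\mu$ and $e_{0\mu}e_0=e_0e_\mu e_0=-e_0e_0e_\mu=-e_\mu$, so $e_0 e_{0\mu}=-e_{0\mu}e_0$. Combining this with (\ref{eq:37}) gives, for $\mu=1,2,3$,
\begin{equation*}
e_0(e_{0\mu}A_\pm)=-e_{0\mu}(e_0A_\pm)=\mp e_{0\mu}A_\pm e_0=\mp(e_{0\mu}A_\pm)e_0,
\end{equation*}
so $e_{0\mu}A_{-}$ commutes with $e_0$ and $e_{0\mu}A_{+}$ anticommutes with $e_0$, as required.

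For the second part I would argue that any even form $B\in K^{ev}(4)$ admits a unique decomposition $B=B_{+}+B_{-}$ with $e_0B_\pm=\pm B_\pm e_0$, and that the $+$-part is precisely the span of $x,e_{12},e_{13},e_{23}$ while the $-$-part is the span of $e_{01},e_{02},e_{03},e$. This follows from direct inspection: the basis elements with no $e_0$-factor commute with $e_0$ (using (\ref{eq:22}) to permute $e_0$ past two generators $e_ie_j$ with $i,j\neq 0$ yields a double sign change), whereas those containing $e_0$, namely $e_{0\nu}$ ($\nu=1,2,3$) and $e=e_0e_1e_2e_3$, anticommute with $e_0$ by the same calculation used above. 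Since $e_{0\mu}A_{-}$ commutes with $e_0$, its unique expansion in the basis $\{x,e_{\mu\nu},e\}$ can only contain terms from $\{x,e_{12},e_{13},e_{23}\}$, giving the shape (\ref{eq:35}); symmetrically, $e_{0\mu}A_{+}$ has the shape (\ref{eq:36}).

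The only slightly delicate step is the characterisation of the shapes (\ref{eq:35}), (\ref{eq:36}) by the commutation behaviour with $e_0$; this is not hard but is the place where one actually needs the structure of $K^{ev}(4)$ rather than just the abstract relations on $e_0,A_\pm$. If one prefers, an alternative — more computational but absolutely transparent — route is to expand $e_{0\mu}A_{\pm}$ for each $\mu\in\{1,2,3\}$ using (a)--(c) and (\ref{eq:22}), and read off the resulting eight scalar coefficients in each case; this produces the same conclusion but obscures the underlying parity argument.
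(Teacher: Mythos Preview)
Your proof is correct. The paper takes the purely computational route you mention at the end: it simply expands $e_{01}A_{-}$ explicitly as $\alpha^{01}x-\alpha^{02}e_{12}-\alpha^{03}e_{13}+\alpha^{4}e_{23}$ and then declares that the remaining cases are analogous. Your argument is genuinely different: you first establish the commutation behaviour abstractly via $e_0e_{0\mu}=-e_{0\mu}e_0$ combined with (\ref{eq:37}), and then recover the shape by identifying (\ref{eq:35}) and (\ref{eq:36}) with the $\pm 1$-eigenspaces of $B\mapsto e_0Be_0$ on $K^{ev}(4)$. This eigenspace viewpoint is cleaner and handles all $\mu\in\{1,2,3\}$ uniformly without any case-by-case expansion; the paper's computation, on the other hand, has the advantage of displaying the explicit coefficients of $e_{0\mu}A_{\pm}$, which one could read off directly if needed in the subsequent system of equations. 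The only point worth making explicit in your write-up is that $e_{0\mu}A_{\pm}$ is indeed even (product of two even elements), so that the eigenspace decomposition of $K^{ev}(4)$ applies; you implicitly use this but do not state it.
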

  \begin{proof} For $\mu=1$ we have
  \begin{eqnarray*}
   e_{01}A_{-}=e_{01}(\alpha^{01}e_{01}+\alpha^{02}e_{02}+\alpha^{03}e_{03}+\alpha^4e)\\
   =\alpha^{01}x-\alpha^{02}e_{12}-\alpha^{03}e_{13}+\alpha^4e_{23}.
  \end{eqnarray*}
  The same proof remains valid for all other cases.
  \end{proof}
  \begin{theorem}
 The form  $\Omega=A\psi$ is  a  non-trivial solution of the discrete Joyce equation if and only if the condition
\begin{equation}\label{eq:38}
 A_{-}=\frac{p_1e_0e_1+p_2e_0e_2+p_3e_0e_3}{m-p_0}A_{+}
 \end{equation}
 holds,  or equivalently,
 \begin{equation}\label{eq:39}
 A_{+}=-\frac{p_1e_0e_1+p_2e_0e_2+p_3e_0e_3}{m+p_0}A_{-}.
 \end{equation}
\end{theorem}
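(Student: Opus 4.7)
The plan is to start from the reduced algebraic equation for $A$ derived in the run-up to Proposition~4, namely equation~(\ref{eq:33}):
\[
-\Big(p_0x+\sum_{\mu=1}^3 p_\mu e_0e_\mu\Big)A \;=\; m\,e_0Ae_0,
\]
and then split both sides according to the decomposition $A=A_{+}+A_{-}$ of (\ref{eq:35})--(\ref{eq:36}). Since (\ref{eq:25}) has already been shown to be equivalent to (\ref{eq:33}), it suffices to show that (\ref{eq:33}) is equivalent to (\ref{eq:38}), and likewise to (\ref{eq:39}).

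First I would use the commutation rule (\ref{eq:37}) to compute the right-hand side. Because $e_0A_{+}e_0=A_{+}e_0e_0=A_{+}x=A_{+}$ and $e_0A_{-}e_0=-A_{-}e_0e_0=-A_{-}$, one gets $e_0Ae_0=A_{+}-A_{-}$. Setting $B:=\sum_{\mu=1}^3 p_\mu e_0e_\mu$, the equation becomes
\[
-p_0A_{+}-p_0A_{-}-BA_{+}-BA_{-} \;=\; mA_{+}-mA_{-}.
\]
Now I invoke the preceding Lemma: $BA_{+}$ is a linear combination of terms $e_{0\mu}A_{+}$ ($\mu=1,2,3$) and hence anticommutes with $e_0$, i.e.\ belongs to the $A_{-}$-type subspace, while $BA_{-}$ belongs to the $A_{+}$-type subspace. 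Since $K^{ev}(4)$ is the direct sum of these two subspaces, matching the $+$-type and $-$-type components on each side yields the two independent identities
\[
-p_0A_{+}-BA_{-}=mA_{+}, \qquad -p_0A_{-}-BA_{+}=-mA_{-}.
\]

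Solving the first identity for $A_{+}$ gives $A_{+}=-\frac{B}{m+p_0}A_{-}$, which is (\ref{eq:39}); solving the second for $A_{-}$ gives $A_{-}=\frac{B}{m-p_0}A_{+}$, which is (\ref{eq:38}). Conversely, substituting either relation back into (\ref{eq:33}) and using the splitting above recovers the equation, so each relation is both necessary and sufficient. The equivalence of (\ref{eq:38}) and (\ref{eq:39}) themselves follows from the mass-shell condition (\ref{eq:34}): a direct computation from (\ref{eq:22}) gives $B^2=\bigl(\sum_{\mu=1}^3 p_\mu^2\bigr)x$ (the cross terms $e_0e_ie_0e_j+e_0e_je_0e_i$ vanish for $i\neq j$, and $(e_0e_i)^2=x$), so $-B^2/(m^2-p_0^2)=x$ by~(\ref{eq:34}), making the two expressions reciprocal.

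The only delicate point I expect is the bookkeeping of the $\pm$-type decomposition: I must be careful to justify that splitting (\ref{eq:33}) into its $+$-type and $-$-type pieces is legitimate, which rests squarely on the Lemma and (\ref{eq:37}). A minor remark will also be needed to handle the formally degenerate cases $m\pm p_0=0$, where (\ref{eq:34}) forces $p_1=p_2=p_3=0$ and the equations become trivial.
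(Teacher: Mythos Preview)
Your proposal is correct and follows essentially the same approach as the paper: both start from (\ref{eq:33}), use (\ref{eq:37}) to evaluate $e_0Ae_0$, and then invoke Lemma~1 to separate the equation into its $A_{+}$-type and $A_{-}$-type components, obtaining (\ref{eq:38}) and (\ref{eq:39}). Your additional remarks on $B^2$ and the degenerate cases $m\pm p_0=0$ go slightly beyond what the paper writes out, but the core argument is identical.
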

\begin{proof}
Let $\Omega=A\psi$ satisfy (\ref{eq:25}). Then we have
\begin{equation*}
   -\Big(\sum_{\mu=0}^3e_\mu p_\mu\Big)(A_{+}+A_{-})=m(A_{+}+A_{-})e_0,
 \end{equation*}
 or
 \begin{equation*}
   -\Big(\sum_{\mu=1}^3e_0e_\mu p_\mu\Big)(A_{+}+A_{-})=p_0(A_{+}+A_{-})+me_0(A_{+}+A_{-})e_0.
 \end{equation*}
 Applying (\ref{eq:37}) we can rewrite the above relationship as
 \begin{equation*}
   -\Big(\sum_{\mu=1}^3e_0e_\mu p_\mu\Big)(A_{+}+A_{-})=(p_0+m)A_{+}+(p_0-m)A_{-}.
 \end{equation*}
 By Lemma~1, collecting like terms gives
 \begin{equation*}
   -(e_0e_1 p_1+e_0e_2 p_2+e_0e_3 p_3)A_{+}=(p_0-m)A_{-},
 \end{equation*}
 \begin{equation*}
   -(e_0e_1 p_1+e_0e_2 p_2+e_0e_3 p_3)A_{-}=(p_0+m)A_{+}.
 \end{equation*}
 Conversely, substituting (\ref{eq:38}) into (\ref{eq:39}) yields the condition (\ref{eq:34}). It follows that $A\psi$ is  a  non-trivial solution of (\ref{eq:25}).
 \end{proof}
  Note that the condition (\ref{eq:38}) can be rewritten as the following system of equations
 \begin{eqnarray*}
(m-p_0)\alpha^{01}-p_1\alpha^{0}-p_2\alpha^{12}-p_3\alpha^{13}=0, \\
  (m-p_0)\alpha^{02}-p_2\alpha^{0}+p_1\alpha^{12}-p_3\alpha^{23}=0, \\
    (m-p_0)\alpha^{03}-p_3\alpha^{0}+p_1\alpha^{13}+p_2\alpha^{23}=0, \\
 (m-p_0)\alpha^4-p_1\alpha^{23}+p_1\alpha^{13}-p_3\alpha^{12}=0.
 \end{eqnarray*}
 Similarly, the condition (\ref{eq:39}) gives the following equivalent system of equations
 \begin{eqnarray*}
(m+p_0)\alpha^{0}+p_1\alpha^{01}+p_2\alpha^{02}+p_3\alpha^{03}=0, \\
  (m+p_0)\alpha^{12}-p_1\alpha^{02}+p_2\alpha^{01}+p_3\alpha^{4}=0, \\
    (m+p_0)\alpha^{13}-p_1\alpha^{03}-p_2\alpha^{4}+p_3\alpha^{01}=0, \\
 (m+p_0)\alpha^{23}+p_1\alpha^{4}-p_2\alpha^{03}+p_3\alpha^{02}=0.
 \end{eqnarray*}
 According to (\ref{eq:34}), $p_0$ can be positive or negative.   Hence  for given $p_\mu$, $\mu=1,2,3$,  there are four linearly independent solutions of the form (\ref{eq:31}) for positive $p_0$ and four for negative
 $p_0$.

 It should be noted that in the continuum case there are also eight linearly independent plane-wave solutions of the Joyce equation for a given momentum vector \cite{J}. Moreover, the conditions (\ref{eq:38}) and (\ref{eq:39}) are the same  in both the continuum and discrete cases \cite{B1}.
\begin{proposition}
Let the form (\ref{eq:31})  be a solution of the discrete Joyce equation. If
\begin{equation}\label{eq:40}
\alpha^{0}=-i\alpha^{12}, \qquad \alpha^{13}=i\alpha^{23},
\end{equation}
then  $A\psi$  satisfies the discrete Hestenes equation.
\end{proposition}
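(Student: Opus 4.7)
The plan is to invoke Proposition~4 and reduce the claim to an algebraic identity on the constant form $A$. Proposition~4 guarantees that if $\Omega=A\psi$ solves the discrete Joyce equation, then $\Omega P_{+12}$ solves the discrete Hestenes equation. Since the scalar-valued $0$-form $\psi$ commutes with the constant form $P_{+12}=\frac{1}{2}(x+ie_1e_2)$, one has $\Omega P_{+12}=(AP_{+12})\psi$, and it therefore suffices to prove $AP_{+12}=A$, or equivalently
\begin{equation*}
Ae_1e_2=-iA.
\end{equation*}

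Next I would split $A=A_{+}+A_{-}$ as in (\ref{eq:35})--(\ref{eq:36}) and check this identity on each summand separately. Since $e_0$ commutes with $e_1e_2$, it also commutes with $P_{+12}$, so right-multiplication by $P_{+12}$ preserves the $e_0$-commutation parity from (\ref{eq:37}); hence $AP_{+12}=A$ is equivalent to the pair $A_{\pm}e_1e_2=-iA_{\pm}$. A direct expansion using the multiplication rules (a)--(c), together with $(e_1e_2)^2=-x$, $e_{13}e_{12}=-e_{23}$, and $e_{23}e_{12}=e_{13}$, yields
\begin{equation*}
A_{+}e_1e_2=-\alpha^{12}x+\alpha^{0}e_{12}+\alpha^{23}e_{13}-\alpha^{13}e_{23},
\end{equation*}
and a component-by-component comparison with $-iA_{+}$ shows that $A_{+}e_1e_2=-iA_{+}$ holds precisely when (\ref{eq:40}) is satisfied (two of the four component equations being redundant).

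The identity for $A_{-}$ then comes for free from the Joyce constraint (\ref{eq:38}). Writing $A_{-}=(m-p_0)^{-1}(p_1e_0e_1+p_2e_0e_2+p_3e_0e_3)A_{+}$ and using associativity of the Clifford product together with the already-established equality $A_{+}e_1e_2=-iA_{+}$,
\begin{equation*}
A_{-}e_1e_2=\frac{1}{m-p_0}(p_1e_0e_1+p_2e_0e_2+p_3e_0e_3)(A_{+}e_1e_2)=-iA_{-}.
\end{equation*}
Combining the two pieces gives $Ae_1e_2=-iA$, whence $AP_{+12}=A$ and $\Omega=\Omega P_{+12}$, and Proposition~4 then delivers the discrete Hestenes equation for $\Omega$. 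The one place requiring genuine attention is the Clifford bookkeeping in the expansion of $A_{+}e_1e_2$; once that expansion is matched with (\ref{eq:40}), the transfer to $A_{-}$ via the Joyce relation is automatic, so no further case analysis is needed.
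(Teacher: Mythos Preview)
Your proof is correct and follows essentially the same route as the paper: invoke Proposition~4, verify the projector identity on $A_{+}$ by a direct Clifford expansion under~(\ref{eq:40}), and then transfer it to $A_{-}$ via the Joyce constraint~(\ref{eq:38}). The only cosmetic difference is that you establish $AP_{+12}=A$ (equivalently $Ae_{12}=-iA$), whereas the paper establishes the complementary statement $AP_{-12}=0$; since $P_{+12}+P_{-12}=x$, these are the same condition.
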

\begin{proof}
By Proposition~4, we have
\begin{equation*}
A\psi=A\psi P_{+12}+A\psi P_{-12},
\end{equation*}
where
$A\psi P_{+12}$  satisfies the discrete Hestenes equation.

Let us compute $A_{+} P_{-12}$. We have
 \begin{eqnarray*}
A_{+} P_{-12}=\frac{1}{2}(A_{+}-iA_{+}e_{12})=\frac{1}{2}(A_{+}-i\alpha^{0}e_{12}+i\alpha^{12}x+i\alpha^{13}e_{23}-i\alpha^{23}e_{13})\\
=\frac{1}{2}\big((\alpha^{0}+i\alpha^{12})x+(\alpha^{12}-i\alpha^{0})e_{12}+(\alpha^{13}-i\alpha^{23})e_{13}+(\alpha^{23}+i\alpha^{13})e_{23}\big).
\end{eqnarray*}
Applying (\ref{eq:40}) gives $A_{+} P_{-12}=0$. From (\ref{eq:38}) it follows that $A_{-} P_{-12}=0$ also. This gives $AP_{-12}=0$. Since the 0-form  $\psi$  (\ref{eq:29}) commutes with any form, we thus get
\begin{equation*}
A\psi=A\psi P_{+12}.
\end{equation*}
\end{proof}
\begin{corollary}
The conditions (\ref{eq:40}) are equivalent to the following conditions
\begin{equation*}
\alpha^{01}=i\alpha^{02}, \qquad \alpha^{03}=-i\alpha^{4}.
\end{equation*}
\end{corollary}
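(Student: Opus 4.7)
I will reduce the corollary to the equivalence $A_{+}P_{-12}=0\iff A_{-}P_{-12}=0$, which is essentially already available from the proof of Proposition~6. Each pair of conditions will be shown equivalent to one of these two vanishing statements, and the Joyce equation will be used to link the two.

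The direct expansion carried out inside the proof of Proposition~6 exhibits $2A_{+}P_{-12}$ as a linear combination of $x,e_{12},e_{13},e_{23}$; by inspection this combination vanishes precisely when $\alpha^{0}=-i\alpha^{12}$ and $\alpha^{13}=i\alpha^{23}$, the other two coefficient relations being $-i$ times these. To bridge this with $A_{-}P_{-12}$, I apply (\ref{eq:38}) by right-multiplication with $P_{-12}$:
\begin{equation*}
A_{-}P_{-12}=\frac{p_1e_0e_1+p_2e_0e_2+p_3e_0e_3}{m-p_0}\,A_{+}P_{-12},
\end{equation*}
so $A_{+}P_{-12}=0$ implies $A_{-}P_{-12}=0$. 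The reverse implication follows verbatim from (\ref{eq:39}). Hence $A_{+}P_{-12}=0\iff A_{-}P_{-12}=0$ whenever $A\psi$ solves the discrete Joyce equation.

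It then remains to compute $A_{-}P_{-12}=\tfrac{1}{2}(A_{-}-iA_{-}e_{12})$ in closed form and to read off the resulting coefficient conditions on $\alpha^{01},\alpha^{02},\alpha^{03},\alpha^4$. Using rules (b) and (c), in particular $e_\mu e_\mu=g_{\mu\mu}x$, a short calculation gives $e_{01}e_{12}=-e_{02}$, $e_{02}e_{12}=e_{01}$, $e_{03}e_{12}=e$ and $e\,e_{12}=-e_{03}$. Assembling,
\begin{equation*}
2A_{-}P_{-12}=(\alpha^{01}-i\alpha^{02})e_{01}+(\alpha^{02}+i\alpha^{01})e_{02}+(\alpha^{03}+i\alpha^{4})e_{03}+(\alpha^{4}-i\alpha^{03})e,
\end{equation*}
and vanishing of this expression is equivalent to exactly two independent conditions, $\alpha^{01}=i\alpha^{02}$ and $\alpha^{03}=-i\alpha^{4}$ (the coefficients of $e_{02}$ and $e$ yield $-i$ times these). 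Combined with the equivalence of the previous paragraph, this gives the corollary. The only obstacle is careful bookkeeping of Clifford signs in $e_{03}e_{12}$ and $e\,e_{12}$, which each require several anticommutations together with $e_\mu e_\mu=g_{\mu\mu}x$; the rest of the argument is a clean projector manipulation parallel to the proof of Proposition~6.
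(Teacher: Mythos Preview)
Your proof is correct and is exactly the argument the paper has in mind: the corollary is stated without proof because it follows immediately from the computation in Proposition~6 together with Theorem~1, and you have spelled this out cleanly by identifying (\ref{eq:40}) with $A_{+}P_{-12}=0$, the new conditions with $A_{-}P_{-12}=0$, and linking the two via (\ref{eq:38}) and (\ref{eq:39}). Your Clifford computations $e_{01}e_{12}=-e_{02}$, $e_{02}e_{12}=e_{01}$, $e_{03}e_{12}=e$, $e\,e_{12}=-e_{03}$ are all correct, so the expansion of $2A_{-}P_{-12}$ and the resulting coefficient conditions are right.
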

\begin{corollary}
 If
\begin{equation*}
\alpha^{0}=i\alpha^{12}, \qquad \alpha^{13}=-i\alpha^{23},
\end{equation*}
or equivalently,
\begin{equation*}
\alpha^{01}=-i\alpha^{02}, \qquad \alpha^{03}=i\alpha^{4},
\end{equation*}
then  $A\psi$  satisfies the discrete Hestenes equation with a reversed mass
sign.
\end{corollary}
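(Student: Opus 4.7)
The plan is to mirror the proof of Proposition~6 with the two projectors $P_{+12}$ and $P_{-12}$ interchanged. Since $A\psi$ solves the discrete Joyce equation, Proposition~4 gives the decomposition
\begin{equation*}
A\psi = A\psi P_{+12} + A\psi P_{-12},
\end{equation*}
in which $A\psi P_{-12}$ satisfies the discrete Hestenes equation with its right-hand side sign reversed, i.e.\ with mass $-m$. So it is enough to prove that the hypothesis forces $AP_{+12}=0$; after that, $A\psi=A\psi P_{-12}$ is exactly the required mass-reversed Hestenes solution.

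The main computation is $A_{+}P_{+12}=\tfrac12(A_{+}+iA_{+}e_{12})$, carried out directly from the explicit expression~(35) and the Clifford rules (b)--(c) of Section~3 (in particular $e_{12}^{2}=-x$, $e_{13}e_{12}=e_{23}$, $e_{23}e_{12}=-e_{13}$). Collecting coefficients on the basis $\{x,e_{12},e_{13},e_{23}\}$ produces the four expressions $\alpha^{0}-i\alpha^{12}$, $\alpha^{12}+i\alpha^{0}$, $\alpha^{13}+i\alpha^{23}$, $\alpha^{23}-i\alpha^{13}$, each of which vanishes under the assumed conditions $\alpha^{0}=i\alpha^{12}$ and $\alpha^{13}=-i\alpha^{23}$. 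Hence $A_{+}P_{+12}=0$. Feeding this into relation~(38) of Theorem~1 yields
\begin{equation*}
A_{-}P_{+12}=\frac{p_{1}e_{0}e_{1}+p_{2}e_{0}e_{2}+p_{3}e_{0}e_{3}}{m-p_{0}}\,A_{+}P_{+12}=0,
\end{equation*}
so that $AP_{+12}=0$. Since the scalar 0-form $\psi$ commutes with every form, this gives $A\psi P_{+12}=0$, which completes the first half of the argument.

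For the stated equivalence of the two systems of conditions I would substitute $\alpha^{0}=i\alpha^{12}$ and $\alpha^{13}=-i\alpha^{23}$ into the four-equation system displayed immediately after Theorem~1 (the one equivalent to~(38)) and read off $\alpha^{01}=-i\alpha^{02}$ and $\alpha^{03}=i\alpha^{4}$; this is just the sign-flipped version of Corollary~1 and requires no new ingredient. There is no serious obstacle here: the whole proof is a sign-reversal of the proof of Proposition~6, and the only point that needs care is the bookkeeping in Proposition~4, namely keeping straight that it is the $P_{-12}$-component (not $P_{+12}$) that carries the mass-reversed Hestenes equation.
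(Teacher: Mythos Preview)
Your proof is correct and follows exactly the approach the paper intends: show that $A_{+}P_{+12}=0$, deduce $A_{-}P_{+12}=0$ from (38), and conclude $A\psi=A\psi P_{-12}$, which by Proposition~4 solves the mass-reversed Hestenes equation. One small slip: the products you quote have the wrong signs (in fact $e_{13}e_{12}=-e_{23}$ and $e_{23}e_{12}=e_{13}$, as one sees by comparing with the computation in the proof of Proposition~6), but your collected coefficients $\alpha^{0}-i\alpha^{12}$, $\alpha^{12}+i\alpha^{0}$, $\alpha^{13}+i\alpha^{23}$, $\alpha^{23}-i\alpha^{13}$ are nonetheless correct, so the argument stands.
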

\begin{proof}
Under these conditions, the part $A\psi P_{+12}$  vanishes and  $A\psi=A\psi P_{-12}$ satisfies the discrete Hestenes equation with a reversed mass
sign.
\end{proof}

As a final remark,  it is worth pointing out that in the case of a discrete version of the real plane wave solution to the Hestenes equation the components of (\ref{eq:29}) should be given in the  form
\begin{equation*}
  \psi_k=(x-p_0e_{12})^{k_0}(x-p_1e_{12})^{k_1}(x-p_2e_{12})^{k_2}(x-p_3e_{12})^{k_3},
 \end{equation*}
 where $x$ and $e_{12}$ are given by (\ref{eq:21}) and $p_\mu\in\mathbb{R}$. This is the subject of current work in progress.
 
\end{document}